\newtheoremstyle{note}
  {\topsep/2}               
  {\topsep/2}               
  {}                      
  {\parindent}            
  {\itshape}              
  {.}                     
  {5pt plus 1pt minus 1pt}
  {}
\theoremstyle{note}
\newtheorem{theorem}{Theorem}
\newtheorem{lemma}{Lemma}
\theoremstyle{definition}
\theoremstyle{remark}
\newtheorem{remark}{Remark}
\newcommand{\vecrm}[1]{\mathbf{#1}}
\newcommand{\mrm}[1]{\mathrm{#1}}
\newcommand{\tr}{\operatorname{tr}}
\newcommand{\rmi}{\mathrm{i}}
\newcommand{\rme}{\mathrm{e}}
\newcommand{\rmd}{\mathrm{d}}
\newcommand{\be}{\begin{equation}}
\newcommand{\ee}{\end{equation}}
\newcommand{\ba}{\begin{align}}
\newcommand{\ea}{\end{align}}
\def\<{\langle}  
\def\>{\rangle}  
\newcommand{\bbF}{\mathbb{F}}
\newcommand{\Sp}[2]{\mrm{Sp}(#1,#2)}
\newcommand{\SL}[2]{\mrm{SL}(#1,#2)}
\newcommand{\GL}[2]{\mrm{GL}(#1,#2)}
\newcommand{\PSU}[2]{\mrm{PSU}(#1,#2)}
\newcommand{\GaL}[2]{\mrm{\Gamma L}(#1,#2)}
\newcommand{\vp}{V_\mrm{0}}
\newcommand{\phw}{\overline{D}}
\newcommand{\pc}{\overline{\mathrm{C}}}
\def\eqref#1{\textup{(\ref{#1})}}  
\newcommand{\thref}[1]{Theorem~\ref{#1}}
\newcommand{\Thref}[1]{Theorem~\ref{#1}}
\newcommand{\Thsref}[1]{Theorems~\ref{#1}}
\newcommand{\lref}[1]{Lemma~\ref{#1}}
\newcommand{\cref}[1]{Conjecture~\ref{#1}}
\newcommand{\Cref}[1]{Conjecture~\ref{#1}}
\newcommand{\rcite}[1]{Ref.~\cite{#1}}
\newcommand{\rscite}[1]{Refs.~\cite{#1}}
\begin{document}

\title{Permutation Symmetry Determines  the Discrete Wigner Function}
\author{Huangjun Zhu}
\affiliation{Perimeter Institute for Theoretical Physics, Waterloo, On N2L 2Y5,
Canada}
\email{hzhu@pitp.ca}
\affiliation{Institute for Theoretical Physics, University of Cologne, 
Cologne 50937, Germany}
\email{hzhu1@uni-koeln.de}

\pacs{03.67.-a, 03.65.-w, 02.10.De}



\begin{abstract}
The Wigner function provides a useful quasiprobability representation of quantum mechanics, with applications in various branches of physics. Many nice properties of the Wigner function are intimately connected with the high symmetry of the underlying operator basis composed of phase point operators: any pair of phase point operators can be transformed to any other pair by a unitary symmetry transformation. We prove that, in the discrete scenario, this permutation symmetry is equivalent to the symmetry group being a unitary 2-design.
Such a highly  symmetric representation can only appear in odd prime power dimensions besides  dimensions 2 and 8.
It  suffices to single out a unique discrete Wigner function among all possible quasiprobability representations.
In the course of our study, we  show
that this  discrete Wigner function is   uniquely determined by Clifford covariance, while no  Wigner function  is Clifford covariant in any even prime power dimension. 
\end{abstract}

\date{\today}
\maketitle

\begin{cbunit}

The Wigner quasiprobability distribution in phase space, originally introduced for studying quantum correction to thermodynamics, has numerous applications in various branches of physics, such as quantum optics, quantum chaos, and quantum computing. It also  provides an alternative formulation of quantum mechanics, which is particularly suitable  for studying quantum-classical correspondence \cite{HillOSW84, KimZ87, CurtFZ14}. The usefulness of the Wigner function is intimately connected to the high symmetry of the underlying operator basis composed of \emph{phase point operators}. Reminiscent of the symplectic geometry in classical phase space, this basis is invariant under displacements and linear canonical transformations, which realize translations, rotations, and squeezing in phase space.  Therefore, "no point, no direction, no scale is distinguished from any other in phase space" \cite{Engl89}.  This  means that the symmetry group of the basis acts doubly transitively on  phase point operators; that is, any pair of distinct phase point operators  can be turned  into any other pair by a unitary symmetry transformation.

Recently, many  discrete analogues of the Wigner function have been introduced and found various applications in quantum information science, such as quantum state tomography and quantum computation \cite{Woot87,Leon95, GibbHW04,VeitFGE12, HowaWVE14}. In addition,  general quasiprobability representations have been found useful for studying quantum foundations \cite{Hard01,FuchS13}.
At this point, it is natural to reflect on the following questions: what is so special about the Wigner function? Is there a simple criteria that can single out a particular quasiprobability representation among all potential candidates?  Although similar questions have been investigated extensively \cite{HillOSW84, Woot87, GibbHW04, Gros06}, no satisfactory answer is known, especially in the discrete scenario. In every odd prime dimension, the Wootters discrete Wigner function  \cite{Woot87} is uniquely determined by Clifford covariance \cite{Gros06}.  However, the situation is not clear for other dimensions, despite the intensive efforts of many researchers over the last three decades.

In this paper we show that the operator basis underlying the Wootters discrete Wigner function is almost uniquely characterized by the permutation symmetry pertaining to the continuous analogue; that is, any pair of distinct phase point operators can be transformed to any other pair by a unitary transformation.
This criterion is motivated by  symplectic geometry of the classical phase space and is based on intrinsic symmetry of phase point operators, 
so it applies equally well to  Wigner functions and general quasiprobability representations, thereby facilitating their comparison.   Our study demonstrates that the Wootters discrete Wigner function is the most symmetric quasiprobability representation of quantum mechanics. Our study also reveals the  group theoretical root why such a highly symmetric  representation can only exist in odd prime power dimensions besides   dimensions 2 and 8, thereby resolving the enigma on the discrete Wigner function that persists  for the past three decades. The  exception for dimension~8 is tied with  a special symmetric informationally
complete measurement (SIC) \cite{Zaun11, ReneBSC04, ScotG10,  ApplFZ15G},  known as Hoggar lines \cite{Hogg98, Zhu12the, Zhu15S}, which is of independent interest.

In the course of our study, we show that an operator basis has doubly transitive permutation symmetry  if and only if its symmetry group is a unitary 2-design \cite{Dank05the, DankCEL09, GrosAE07,Zhu15MC,Webb15}.
Therefore, the basis of phase point operators is almost uniquely characterized by its symmetry group being a unitary 2-design. In addition, in every odd prime power dimension, it is  the only operator basis up to scaling that is Clifford covariant, while no such operator basis exists in any even prime power dimension.  The first conclusion establishes a surprising connection between the physics of phase space and a ubiquitous tool in quantum information science.
The latter conclusion generalizes a result of Gross~\cite{Gros06} and settles the long-standing open problem on Clifford covariant discrete Wigner functions.  These results  may have profound implications for quantum information,  quantum computation, and  quantum foundations, given the prominent roles played by discrete Wigner functions, unitary 2-designs, and the Clifford group. In particular, the existence of a Clifford covariant discrete Wigner function is crucial to understanding many  fascinating subjects, including computational speedup and contextuality   \cite{VeitFGE12, HowaWVE14}.

To set up the stage, we need to introduce the (multipartite) Heisenberg-Weyl (HW) group.  In prime dimension $p$,  the HW group $D$ is
generated by the phase operator $Z$
 and the cyclic-shift operator $X$,
\begin{equation}\label{eq:HW}
Z|u\rangle=\omega^u |u\rangle, \quad X|u\rangle=|u+1\rangle,
\end{equation}
where $\omega=\mathrm{e}^{2\pi \mathrm{i}/p}$,
$u\in \bbF_p$, and $\bbF_p$  is the field
of integers modulo $p$.
In prime power dimension $q=p^n$, the  HW group $D$ is the tensor power of $n$ copies of the HW group in  dimension $p$. The elements in the HW are called displacement operators. Up to phase factors, they can be labeled by  vectors in a symplectic space of dimension $2n$ over $\mathbb{F}_p$ as $
D_{\mu}= \tau^{\sum_j \mu_j \mu_{n+j} }\prod_{j=1}^n X_j^{\mu_j}  Z_j^{\mu_{n+j}}$,
where  $\tau=-\rme^{\pi \rmi/p}$,  while $Z_j$ and $X_j$ are the phase operator and cyclic shift operator of the $j$th party.

The \emph{Clifford group} $\mathrm{C}$ is  composed of  unitary operators that map displacement operators to displacement operators up  to phase factors. Any Clifford unitary $U$ induces a symplectic transformation on the symplectic space that labels displacement operators \cite{BoltRW61I,BoltRW61II,Zhu15Sh}. The quotient $\pc/\phw$ ($\overline{G}$ denotes the group $G$ modulo phase factors) can be identified with the symplectic group $\Sp{2n}{p}$. 
The Clifford group plays a fundamental role in various branches of quantum
information science, such as quantum computation,  quantum error correction,  quantum tomography, and randomized
benchmarking. Many of these applications  are tied with the fact that the Clifford group is a unitary 2-design \cite{Dank05the, DankCEL09, GrosAE07}. When $p=2$, it  is also a unitary 3-design according to a recent result of the author \cite{Zhu15MC} and that of Webb \cite{Webb15}. Recall
that a set of $K$ unitary operators
$\{U_j\}$ is a  \emph{unitary $t$-design} \cite{Dank05the, DankCEL09, GrosAE07,Zhu15MC,Webb15}
if it satisfies
\begin{equation}\label{eq:U2design}
\frac{1}{K} \sum _j U_j^{\otimes t} M(U_j^{\otimes t})^\dag =\int \rmd U
U^{\otimes t}M(U^{\otimes t})^\dag
\end{equation}
for any operator $M$ acting on the $t$-partite Hilbert space, where the integral
is taken over the whole unitary group with respect to the normalized Haar
measure. The Clifford group and unitary 2-designs are interesting here because of their close connections with the Wootters discrete Wigner function, as we shall see shortly.

 The center of $\Sp{2n}{p}$ for odd prime $p$ is generated by the scalar matrix $-1$, referred to as the \emph{central involution} henceforth (an involution is simply an element of order~2). The $q^2$  Clifford unitaries that induce the central involution all have order 2 and  are called \emph{principal involutions}.
With  suitable  phase factors,  the principal involutions happen to be the phase point operators underlying the Wootters discrete Wigner function \cite{Woot87,Gros06}.
A special phase point operator is the \emph{parity operator} $\vp$,
\begin{equation}
\vp|\vecrm{u}\rangle=|-\vecrm{u}\rangle, \quad \vp D_\mu \vp^\dag =D_{-\mu},
\end{equation}
where the vectors $\vecrm{u}\in \bbF_p^n$ label the elements in the computational basis.  Other phase point operators are basically
 displaced parity operators,
\begin{equation}
V_\mu=D_\mu \vp D_\mu^\dag=D_{2\mu}V_0.
\end{equation}
Each $V_\mu$ has $(q+1)/2$ eigenvalues equal to 1 and $(q-1)/2$ eigenvalues equal to $-1$, so $\tr(V_\mu)=1$ and $\tr(V_\mu^2)=q$.

The set of  phase point operators $V_{\mu}$ forms an orthogonal basis (and  a unitary error basis) in the operator space.  This basis is invariant under the whole Clifford group, so any pair of distinct phase point operators can be transformed to any other pair by a unitary transformation as pointed out earlier.
Any state $\rho$ can be expanded in  phase point operators
$\rho=\sum_\mu W_\mu V_\mu$, and the coefficients $W_{\mu} =\tr(\rho V_{\mu})/q$ define the Wootters discrete Wigner function, which is Clifford covariant \cite{Woot87, Gros06}.

An \emph{operator frame} $\{F_j\}$ on a Hilbert space is a set of  operators that span the operator space. Frames are interesting to us because they provide a unified framework for studying general quasiprobability representations of quantum mechanics \cite{FerrE08,FerrE09}. Any operator frame in dimension $d$ has at least $d^2$ elements; those attaining the lower bound are   minimal. The symmetry group of an operator frame is composed of all unitary operators $U$ that leave the frame invariant, that is, $UF_jU^\dag=F_{\sigma(j)}$, where $\sigma$ is a permutation; unitary operators that differ only by overall phase factors are identified, but the phases of frame elements do matter. This group is of paramount importance because the symmetry of the frame is responsible for the symmetry and usefulness of the derived quasiprobability representation, as manifested in the Wigner function. The frame $\{F_j\}$
is   \emph{group covariant} if its symmetry group acts transitively on the frame elements. It is  \emph{supersymmetric} if the symmetry group acts doubly transitively, that is, any ordered pair of distinct frame elements can be mapped to any other pair. Similar terminology applies to operator bases and positive-operator-valued measures (POVMs).

A frame  is called a Wigner-Wootters frame (or basis) if it has the form  $\{a+bV_\mu\}$ with $V_\mu$ phase point operators and $a, b$   constants. According to the above discussion, any  Wigner-Wootters frame is supersymmetric. Remarkably, the converse is also true except for frames constructed from two special SICs. Recall that a SIC in dimension $d$
is composed of $d^2$ subnormalized projectors onto pure states
$\Pi_j=|\psi_j\rangle\langle\psi_j|$ with equal pairwise fidelity $|\langle\psi_j|\psi_k\rangle|^2=(d\delta_{jk}+1)/(d+1)$
\cite{Zaun11,ReneBSC04, ScotG10,Zhu12the, ApplFZ15G}.
 A frame of the form $\{a+b\Pi_j\}$ is called a SIC frame (or basis). A SIC frame is supersymmetric if and only if the corresponding SIC is supersymmetric. According to a recent result of the author \cite{Zhu15S}, the SIC in dimension~2, the Hesse SIC in dimension~3, and  the set of Hoggar lines in dimension~8 are the only three supersymmetric SICs (super-SICs in short). The  frames corresponding to the three SICs are referred to as tetrahedron frames, Hesse frames, and Hoggar frames, respectively. Similar terminology applies to operator bases and POVMs so constructed.
 Interestingly, a Hesse frame is also a Wigner-Wootters frame; note that the projectors onto one-dimensional eigenspaces of phase point operators in dimension 3 form the Hesse SIC \cite{Zhu15S}. 

\begin{theorem}\label{thm:2designSSY}
An operator basis is supersymmetric if and only if its symmetry group is
a unitary 2-design.
\end{theorem}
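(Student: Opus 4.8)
The plan is to express both sides of the equivalence in terms of a single representation-theoretic quantity attached to the symmetry group $G$, namely $\dim\operatorname{End}_G(V)$, where $V$ is the \emph{conjugation representation} $U\mapsto(X\mapsto UXU^\dagger)$ of $G$ on the operator space $\mathcal{L}(\mathcal{H})\cong\mathcal{H}\otimes\mathcal{H}^{*}$. A preliminary observation: $G$ is finite. Indeed, a unitary fixing every frame element $F_j$ also fixes every operator, since the $F_j$ span $\mathcal{L}(\mathcal{H})$, hence is a scalar and trivial modulo phases; thus $G$ embeds into the symmetric group on the $d^{2}$ frame elements. Throughout I lift $G$ to genuine unitaries — the resulting phase ambiguities are immaterial, since both conjugation and the two-fold twirl are phase-independent — and I assume $d\ge2$, the case $d=1$ being degenerate.

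\emph{Step 1: supersymmetry $\Leftrightarrow\dim\operatorname{End}_G(V)=2$.} Because $G$ permutes $\{F_j\}$, the matrix of the conjugation action of any $U\in G$ expressed in this basis is a permutation matrix; hence $V$ is isomorphic to the permutation module $\mathbb{C}\Omega$ with $\Omega=\{F_j\}$. For a permutation module, $\dim\operatorname{End}_G(\mathbb{C}\Omega)$ equals the number of $G$-orbits on $\Omega\times\Omega$, the orbit-indicator matrices forming a basis of the endomorphism algebra. Since $|\Omega|=d^{2}\ge2$, the diagonal of $\Omega\times\Omega$ is a single orbit precisely when $G$ is transitive on $\Omega$, and its (nonempty) complement is a single orbit precisely when $G$ is transitive on ordered pairs of distinct elements; so there are exactly two orbits iff $G$ acts doubly transitively, i.e.\ iff the basis is supersymmetric.

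\emph{Step 2: being a unitary $2$-design $\Leftrightarrow\dim\operatorname{Comm}(G^{\otimes2})=2$.} The two-fold $G$-twirl $M\mapsto\frac1{|G|}\sum_{U\in G}U^{\otimes2}M(U^\dagger)^{\otimes2}$ is idempotent and Hilbert–Schmidt self-adjoint (reindex $U\mapsto U^{-1}$), hence is the orthogonal projection of $\mathcal{L}(\mathcal{H}^{\otimes2})$ onto $\operatorname{Comm}(G^{\otimes2})$, the commutant of $\{U^{\otimes2}:U\in G\}$. Likewise the Haar twirl projects onto $\operatorname{Comm}(\mathrm{U}(\mathcal{H})^{\otimes2})$, which is two-dimensional — spanned by the identity and the swap — by Schur–Weyl duality. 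Since $G\subseteq\mathrm{U}(\mathcal{H})$ forces $\operatorname{Comm}(G^{\otimes2})\supseteq\operatorname{Comm}(\mathrm{U}(\mathcal{H})^{\otimes2})$, the two projections agree — i.e.\ $G$ is a unitary $2$-design — if and only if $\dim\operatorname{Comm}(G^{\otimes2})=2$.

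\emph{Step 3 and conclusion.} It remains to identify $\dim\operatorname{Comm}(G^{\otimes2})$ with $\dim\operatorname{End}_G(V)$. Writing $\chi$ for the character of the defining representation $\mathcal{H}$ of $G$, one has $\dim\operatorname{Comm}(G^{\otimes2})=\dim\operatorname{End}_G(\mathcal{H}^{\otimes2})=\frac1{|G|}\sum_{g}|\chi(g)|^{4}$, while $V=\mathcal{H}\otimes\mathcal{H}^{*}$ has character $|\chi|^{2}$ and hence $\dim\operatorname{End}_G(V)=\frac1{|G|}\sum_{g}|\chi(g)|^{4}$ as well; equivalently, both $\operatorname{End}$-spaces are the space of $G$-invariants of $(\mathcal{H}^{*})^{\otimes2}\otimes\mathcal{H}^{\otimes2}$, the order of tensor factors being irrelevant. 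Chaining Steps 1–3 gives supersymmetric $\Leftrightarrow\dim\operatorname{End}_G(V)=2\Leftrightarrow\dim\operatorname{Comm}(G^{\otimes2})=2\Leftrightarrow$ unitary $2$-design, which is the assertion. I do not expect a serious obstacle: the substance is the permutation-module dictionary of Step 1 together with the character identity of Step 3, and the only delicate bookkeeping is keeping track of phases (the symmetry group naturally lives in $\mathrm{PU}(\mathcal{H})$ whereas the design condition is phrased with honest unitaries) and confirming that $G$ is finite so that character theory applies.
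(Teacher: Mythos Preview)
Your proof is correct and follows essentially the same idea as the paper's: both identify supersymmetry with having exactly two orbits on ordered pairs of basis elements, and the $2$-design property with the commutant of $G^{\otimes 2}$ on $\mathcal{H}^{\otimes 2}$ being two-dimensional, then equate these counts. The packaging differs slightly: the paper applies its Lemma~1 (orbits $=$ dimension of commutant) directly to $\overline{G}_{(2)}=\{U\otimes U\}$ acting on the operator basis $\{F_j\otimes F_k\}$ of $\mathcal{L}(\mathcal{H}\otimes\mathcal{H})$, so that the orbit count on pairs and the commutant dimension of $G^{\otimes 2}$ coincide in one stroke; you instead route through the conjugation module $V\cong\mathcal{H}\otimes\mathcal{H}^{*}$ and close the loop with the character identity $\dim\operatorname{End}_G(V)=\tfrac{1}{|G|}\sum_g|\chi(g)|^{4}=\dim\operatorname{Comm}(G^{\otimes 2})$. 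Your version is a bit longer but more self-contained, and your explicit handling of finiteness and of the projective-versus-linear issue is a welcome addition that the paper leaves implicit.
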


\begin{theorem}\label{thm:WignerClifford}
In any odd prime power dimension, an operator basis is Clifford covariant  if and only if it is a Wigner-Wootters basis. The Wootters
discrete Wigner function is the unique Clifford covariant discrete Wigner function.
No such operator basis or discrete Wigner function
exists in any even prime power dimension.
\end{theorem}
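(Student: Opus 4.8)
The plan is to combine \thref{thm:2designSSY} with the classification of supersymmetric operator bases established above: every such basis is a Wigner-Wootters basis $\{aI+bV_\mu\}$ or a super-SIC frame, and, by \cite{Zhu15S}, the only super-SIC frames are the tetrahedron frame (dimension $2$), the Hesse frame (dimension $3$), and the Hoggar frame (dimension $8$), the Hesse frame itself being a Wigner-Wootters basis. The easy implication is then immediate: the parity operators $V_\mu$ form a basis that is invariant under the whole Clifford group $\pc$ and on which $\pc$ acts transitively, while $aI$ is Clifford invariant, so $\pc$ permutes $\{aI+bV_\mu\}$ transitively; hence any Wigner-Wootters basis (equivalently, $\{aI+bV_\mu\}$ with $b\neq0$ and $aq+b\neq0$, the conditions that make it linearly independent) is Clifford covariant.

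For the converse, let $\{F_j\}$ be a Clifford-covariant operator basis, so its symmetry group contains $\pc$. Since $\pc$ is a unitary 2-design and any group containing a unitary 2-design is again one --- decomposing the group average into cosets reduces it to the conjugation-invariant Haar twirl --- the symmetry group of $\{F_j\}$ is a unitary 2-design, so by \thref{thm:2designSSY} the basis is supersymmetric. As the tetrahedron and Hoggar frames live in even dimensions and the Hesse frame is a Wigner-Wootters basis, the classification forces $\{F_j\}$ to be a Wigner-Wootters basis whenever the dimension is an odd prime power. In particular, a Clifford-covariant discrete Wigner function has phase point operators forming a Clifford-covariant basis, hence of the form $aI+bV_\mu$; imposing the defining properties of these operators --- Hermitian, unit trace, and forming a unitary error basis (so with spectrum $\{1,-1\}$) --- pins down $a=0$, $b=1$, so the Wootters function is the unique Clifford-covariant discrete Wigner function.

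The odd-dimensional converse can also be proved directly, in the manner of \cite{Gros06}, without invoking the classification. First one shows that $\phw$ acts regularly on the $q^2$-element Clifford orbit: otherwise its stabilizer would be a nonzero subspace $H$, forcing $F_0$ to be supported on $\{D_\mu:\mu\in H^\perp\}$, and $\Sp{2n}{p}$ would have to permute the $\phw$-orbits transitively --- impossible, since the $\Sp{2n}{p}$-orbit of a nonzero proper subspace always exceeds the subspace in size. Consequently $\operatorname{Stab}_{\pc}(F_0)$ is a complement of $\phw$ in $\pc$, isomorphic to $\Sp{2n}{p}$ and fixing $F_0$, so $F_0$ lies in the commutant of the Weil representation of $\Sp{2n}{p}$ on $\mathbb{C}^q$; for odd $p$ this commutant is two-dimensional and spanned by $I$ and a displaced parity operator, whence $\{F_j\}$ is a Wigner-Wootters basis.

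It remains to treat even prime power dimension $q=2^n$. By the argument above a Clifford-covariant operator basis there would be supersymmetric, hence a Wigner-Wootters basis or a super-SIC frame; but no Wigner-Wootters basis exists in even dimension, as the phase point operators are built from the central involution of $\Sp{2n}{p}$ and the center of $\Sp{2n}{2}$ is trivial. So the basis would have to be the tetrahedron frame ($q=2$) or the Hoggar frame ($q=8$). The first is excluded by order counting: the unitary symmetry group of the tetrahedron frame (which is that of the tetrahedron SIC) is the tetrahedral rotation group $A_4$ of order $12$, strictly smaller than the single-qubit Clifford group $\pc$ of order $24$. The second is the main obstacle: using the structure of the Hoggar lines determined in \cite{Zhu15S}, one shows that their symmetry group --- which is, like $\pc$, a unitary 2-design --- does not contain the three-qubit Clifford group; equivalently, the $64$ Hoggar projectors do not form a single $\pc$-orbit, because the stabilizer of a Hoggar fiducial in $\pc$, taken modulo $\phw$, is a proper subgroup of $\Sp{6}{2}$. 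This is precisely the exceptional status of dimension $8$ flagged in the introduction, and, the Hoggar frame being one of only three supersymmetric SICs whose symmetry group is itself a 2-design, it cannot be dismissed by the soft order-counting or representation-theoretic arguments that handle every other case. Granting this, no Clifford-covariant operator basis, and hence no Clifford-covariant discrete Wigner function, exists in any even prime power dimension.
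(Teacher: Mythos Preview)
Your primary route---deduce supersymmetry from \thref{thm:2designSSY}, invoke the classification in \thref{thm:SuperFrame}, then eliminate the tetrahedron and Hoggar exceptions---is logically inverted relative to the paper and imports CFSG into a statement the paper proves without it. In the paper, \thref{thm:SuperFrame} comes \emph{after} \thref{thm:WignerClifford}, and its odd-$p$ case explicitly reuses ``the same reasoning as in the proof of \thref{thm:WignerClifford}''; so treating \thref{thm:SuperFrame} as an input here is at best a heavy detour. More seriously, your even-dimensional argument has a genuine gap at $q=8$: you assert that the Hoggar frame is not Clifford covariant, then write ``Granting this,'' which is exactly the point that needs proof.

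The paper's argument is shorter and structural. Since $\phw$ is irreducible, \lref{lem:orbits} forces $\phw$ to act transitively---hence regularly---on the $q^2$ basis elements, so the stabilizer $\overline{S}$ of $F_1$ is a complement of $\phw$ in $\pc$ and is isomorphic to $\Sp{2n}{p}$. For odd $p$ the center of $\overline{S}$ contains a principal involution $\overline{U}$; choosing the phase so that $U$ is a phase point operator (hence $\tr U\neq 0$), every element of $S$ genuinely commutes with $U$, and the two inequivalent irreducible components of $\overline{S}$ supplied by \lref{lem:2transitiveStab} are precisely the eigenspaces of $U$, giving $F_1\in\mathrm{span}\{I,U\}$. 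For $p=2$ and $n\geq 2$ the key fact is that $\phw$ is \emph{not} complemented in $\pc$ by theorem~7 of \rcite{BoltRW61II}; hence no such $\overline{S}$ exists, and there is no Clifford-covariant basis in any of these dimensions---including $8$---without any appeal to Hoggar lines or CFSG. The case $n=1$ is a direct check.

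Your alternative ``direct'' paragraph for odd $p$ is close in spirit, but the regularity step is muddled (the sentence about ``the $\Sp{2n}{p}$-orbit of a nonzero proper subspace'' does not yield the contradiction you want); the clean way is simply \lref{lem:orbits} applied to the irreducible group $\phw$. Invoking ``the commutant of the Weil representation'' also presupposes you know which complement you have; identifying the central principal involution inside $\overline{S}$, as the paper does, sidesteps that issue.
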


\begin{theorem}[CFSG]\label{thm:SuperFrame}
Any supersymmetric operator frame is unitarily equivalent to a Wigner-Wootters frame except for tetrahedron frames in dimension 2 and Hoggar frames in dimension 8.
\end{theorem}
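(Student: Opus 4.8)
The plan is to peel off enough structure from a supersymmetric frame to bring the classification of finite $2$-transitive groups to bear, and then to run through the two families that classification produces.

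\emph{Reduction to a basis.} Let $\{F_j\}_{j=1}^N$ be a supersymmetric frame with symmetry group $G$. A unitary fixing every $F_j$ commutes with the span of the $F_j$, hence is scalar, so $G$ embeds into $S_N$ through its action on the $F_j$ and is in particular finite. The assignment $e_j\mapsto F_j$ defines a surjection of $G$-modules from the permutation module $\mathbb{C}^N$ onto $\mathcal{B}(\mathbb{C}^d)$ with the conjugation action, and for a $2$-transitive $G$ the permutation module decomposes as $\mathbf{1}\oplus\pi$ with $\pi$ irreducible and nontrivial, whose only quotient modules are $0,\mathbf{1},\pi,\mathbf{1}\oplus\pi$. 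Since $\mathcal{B}(\mathbb{C}^d)$ contains the $G$-fixed line $\mathbb{C}I$ and has dimension $d^2\ge 4$, it must be $\mathbf{1}\oplus\pi$; hence $N=d^2$, the frame is minimal, $G$ acts irreducibly on the traceless operators (equivalently projectively irreducibly on $\mathbb{C}^d$), and by \thref{thm:2designSSY} $G$ is a unitary $2$-design. So $G$ is a finite $2$-transitive group of degree $d^2$ carrying a faithful degree-$d$ projective representation, and by the classification of finite $2$-transitive groups---a consequence of \textup{CFSG}---$G$ is either of \emph{affine type}, with a regular elementary abelian normal subgroup of order $d^2$, or of \emph{almost simple type}.

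\emph{The almost simple case is vacuous.} Then $d^2$ is a perfect square on the explicit list of degrees of $2$-transitive almost simple groups, and the socle (or a cover of it) carries a faithful irreducible projective representation of degree $d=\sqrt{d^2}$. Checking the list---alternating groups, $\mathrm{PSL}(k,q)$, $\Sp{2m}{2}$, $\mathrm{PSU}(3,q)$, the Suzuki and Ree groups, and the sporadic cases---against the Landazuri--Seitz lower bounds on faithful representation dimensions eliminates every possibility, so $G$ is of affine type.

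\emph{The affine case.} The regular normal subgroup $V$ is elementary abelian with a faithful irreducible projective representation of degree $d$, so $d=p^n$, $V$ carries a nondegenerate symplectic form, and by the finite Stone--von Neumann theorem $V$ is, projectively, the Heisenberg--Weyl group $\overline{D}$. The point stabilizer $G_0$ normalizes $\overline{D}$, hence embeds into $\overline{\mathrm{C}}/\overline{D}\cong\Sp{2n}{p}$ transitively on the nonzero vectors; thus $G\le\overline{\mathrm{C}}$ and the basis is Heisenberg--Weyl covariant, $F_\mu=D_\mu F_0 D_\mu^\dagger$. Writing $F_0=\sum_\mu c_\mu D_\mu$, linear independence of $\{F_\mu\}$ is equivalent to all $c_\mu\neq 0$ (the map $F_0\mapsto(F_\mu)_\mu$ is a rescaled symplectic Fourier transform), while $G_0$-invariance of $F_0$ forces $|c_\mu|$ to be constant on $G_0$-orbits, hence equal for all $\mu\neq 0$. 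For odd $p$ the Weil representation is phase-transparent, $U_S D_\mu U_S^\dagger=D_{S\mu}$, so $G_0$-invariance pins the $c_\mu$ themselves to a common value for $\mu\ne0$ (up to a harmless displacement twist); since $\sum_\mu D_\mu=p^n V_0$, this puts $F_0$ in $\operatorname{span}\{I,V_0\}$ with nonzero parity component, i.e.\ makes $\{F_\mu\}$ a Wigner--Wootters basis. With \thref{thm:WignerClifford} this handles every odd prime power dimension and shows no supersymmetric frame exists in any other odd dimension.

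\emph{The even case and the main obstacle.} When $p=2$ the parity operator collapses ($V_0=I$) and no Wigner--Wootters basis exists, so one classifies the admissible seeds $F_0$ directly: Hering's (also \textup{CFSG}-based) list of subgroups of $\mathrm{GL}(2n,2)$ transitive on nonzero vectors enumerates the candidates for $G_0\le\Sp{2n}{2}$, and for $\Sp{2n}{2}$ itself and the other large candidates the quadratic-form obstruction characteristic of the even-characteristic Clifford group shows that no seed with all $c_\mu\ne0$ is $G_0$-invariant once $n\ge 2$. The survivors are $n=1$, where a transitive $G_0\le\Sp{2}{2}=S_3$ reproduces the tetrahedron frame, and $n=3$, where $G_0$ lies between $\mathrm{G}_2(2)'$ and $\mathrm{G}_2(2)$ in $\Sp{6}{2}$ and (examining how the $8$-dimensional Weil representation restricts) every admissible seed has the form $aI+b\Pi$ with $\Pi$ a rank-one projector, so that the classification of supersymmetric SICs \cite{Zhu15S} identifies the basis as a Hoggar frame; dimensions $4$ and $2^n$ with $n\ge4$ are thereby ruled out. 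Assembling the pieces gives the stated trichotomy, and the converse directions are immediate (Wigner--Wootters frames are Clifford covariant, while the tetrahedron and Hoggar frames arise from supersymmetric SICs). The crux is this last step: proving that a ``flat'' seed operator with a symplectic-transitive stabilizer cannot survive the Heisenberg--Weyl translation-independence test except in dimensions $2$ and $8$. That near coincidence is exactly where the exceptional Hoggar lines---and hence the super-SIC classification---become indispensable, and it, together with the representation-theoretic elimination of the almost simple groups, carries most of the technical weight.
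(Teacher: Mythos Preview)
Your overall architecture is right and matches the paper: reduce to a basis, invoke Burnside's dichotomy, eliminate the almost-simple socle by Landazuri--Seitz type bounds (this is the paper's \thref{thm:SuperFrameHW}), and then analyze the point stabilizer $G_0$ as a transitive subgroup of $\Sp{2n}{p}$. The permutation-module argument for minimality is a clean alternative to the paper's Gram-matrix remark. But the endgame in both parities has genuine gaps, and in each case the paper's mechanism is different from yours.

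For odd $p$, the ``harmless displacement twist'' is precisely the crux, and you have not justified it. A general element of $G_0$ has the form $D_{\alpha(S)}U_S$ with $\alpha:H\to\mathbb{F}_p^{2n}$ a $1$-cocycle, and invariance of $F_0=\sum_\mu c_\mu D_\mu$ gives $c_{S\mu}=c_\mu\,\omega^{-\langle\alpha(S),S\mu\rangle}$; the phases line up (and the twist is truly a displacement conjugation) only when $\alpha$ is a coboundary, i.e.\ when $H^1(H,\mathbb{F}_p^{2n})=0$. This \emph{is} true, but only because every transitive $H\le\Sp{2n}{p}$ contains the central involution $-1$---the paper's \lref{lem:TransitiveOdd}, itself a case-by-case check through Hering's list---after which $2\alpha(S)=(1-S)\alpha(-1)$ forces $\alpha$ to be a coboundary. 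You neither state nor use this, and invoking \thref{thm:WignerClifford} does not help since you have not established covariance under the \emph{full} Clifford group. The paper bypasses Fourier coefficients altogether: once $-1\in H$, its lift to $G_0$ is a principal involution central in $G_0$, and by \lref{lem:2transitiveStab} its two eigenspaces are exactly the two irreducible components of $G_0$, so $F_0$ lies in their span, i.e.\ in $\operatorname{span}\{I,V_\mu\}$. For $p=2$, your ``quadratic-form obstruction'' is not an argument as written. The paper's route again pivots on \lref{lem:2transitiveStab}: $G_0$ has two irreducible components of degrees $h$ and $2^n-h$; if $h=1$ the one-dimensional projector is a super-SIC fiducial and \cite{Zhu15S} gives the tetrahedron or Hoggar frame, while for $h\ge 2$ a pass through Hering's list of transitive $H\le\Sp{2n}{2}$ together with the Tiep--Zalesskii and L\"ubeck degree tables shows $H$ has no nontrivial projective irreducible of degree $\le 2^{n-1}$ (with a single $n=2$, $H\cong\SL{2}{4}$ exception disposed of by hand). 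Your sketch names $G_2(2)'\cong\mathrm{PSU}(3,3)$ correctly for Hoggar but does not explain why the other transitive candidates at each $n$ (e.g.\ $\Sp{2}{2^n}$ or $G_2(2^k)$) are forced into the $h=1$ regime or excluded.
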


 \Thsref{thm:2designSSY}, \ref{thm:WignerClifford}, and  \ref{thm:SuperFrame}
establish spectacular connections among many interesting subjects, including discrete Wigner functions,   unitary 2-designs,  supersymmetric operator bases, and the Clifford group.  They also settle a number of persistent open problems concerning discrete Wigner functions and the Clifford group. In particular, \thref{thm:WignerClifford} generalizes a result of Gross  \cite{Gros06} and settles the  open problem on Clifford covariant discrete Wigner functions.
\Thsref{thm:2designSSY} and \ref{thm:SuperFrame} further establish two appealing characterizations of the basis of phase point operators: 
this basis  is almost uniquely determined by the permutation symmetry or the characteristic that its symmetry group is a unitary 2-design.

 The proofs of \Thsref{thm:2designSSY} and \ref{thm:WignerClifford} turn out to be surprisingly simple, as we shall see shortly. \Thref{thm:SuperFrame} does not assume that the   frame is minimal; rather
this condition  is a consequence of the permutation symmetry, which can be
verified by inspecting the Gram matrix. The proof of \thref{thm:SuperFrame} relies on the classification of 2-transitive permutation groups, which in turn relies on the classification of finite simple groups (CFSG) \cite{DixoM96book, Came99book, Zhu15S}. To manifest this point, all theorems and lemmas are marked with "CFSG" whenever CFSG is involved in the proofs. However,  knowledge of CFSG is not necessary to understand our reasoning, whose  basic idea  is  pretty simple.

To prove \Thsref{thm:2designSSY} and \ref{thm:WignerClifford}, we need to introduce two simple but  useful lemmas.  
\begin{lemma}\label{lem:orbits}
Suppose $\overline{G}$ is a subgroup of the symmetry group of an operator basis. Then
the number of orbits  of
$\overline{G}$ on the basis elements is equal to the sum of squared multiplicities of
all the
inequivalent irreducible components of $\overline{G}$. In particular, $\overline{G}$ acts transitively
on the basis if and only if it is irreducible.
\end{lemma}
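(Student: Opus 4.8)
The plan is to realise the conjugation action of $\overline{G}$ on the operator space in two equivalent guises — as a permutation representation on the basis elements, and as the adjoint of the natural representation of $\overline G$ — and to read off the orbit count from each.

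Write $\{F_j\}_{j=1}^{d^2}$ for the basis, $d=\dim\mathcal H$, and let $\ad$ denote the action of the symmetry group on $\mathcal B(\mathcal H)$ by conjugation, $X\mapsto UXU^\dagger$. Since overall phase factors cancel, $\ad$ descends to a genuine (linear) representation of $\overline G\subseteq\mathrm{PU}(d)$. Relative to the basis $\{F_j\}$, the matrix of $\ad(\bar g)$ is exactly the permutation matrix of the associated $\sigma\in S_{d^2}$, because the defining relation $UF_jU^\dagger=F_{\sigma(j)}$ carries no scalar prefactor; hence $\ad$ is equivalent, as an $\overline G$-representation, to the permutation representation on the index set. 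For any permutation representation the invariant vectors are spanned by the indicator vectors of the orbits, so the multiplicity of the trivial representation in $\ad$ equals the number of orbits of $\overline G$ on $\{F_j\}$.

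On the other hand, fix a linear lift $\rho$ of the natural (projective) representation of $\overline G$ on $\mathcal H$; the kernel of $G\to\overline G$ is central and acts by scalars, so nothing below depends on the choice of lift. Under the identification $\mathcal B(\mathcal H)\cong\mathcal H\otimes\mathcal H^*$ one has $\ad\cong\rho\otimes\rho^*$, with that central kernel acting trivially. Decomposing $\rho\cong\bigoplus_i m_i\rho_i$ into pairwise inequivalent irreducibles, Schur's lemma gives $\operatorname{Hom}(\mathbf 1,\rho\otimes\rho^*)\cong\operatorname{End}_G(\rho)\cong\bigoplus_i M_{m_i}(\mathbb C)$, of dimension $\sum_i m_i^2$ (in character language this is simply $\langle\chi_\rho,\chi_\rho\rangle$, since $\chi_{\ad}=|\chi_\rho|^2$). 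Combining with the previous paragraph, the number of orbits of $\overline G$ on $\{F_j\}$ equals $\sum_i m_i^2$.

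The final assertion is then immediate: $\overline G$ is transitive on the basis iff there is a single orbit iff $\sum_i m_i^2=1$, which forces exactly one irreducible constituent, of multiplicity one — i.e.\ $\rho$, equivalently $\overline G$, is irreducible. The only point requiring care is the bookkeeping between the projective group $\overline G$ and a linear lift $G$ — one must check that $\ad$ is a genuine linear representation of $\overline G$ and that the multiplicities $m_i$ are well defined — but this is routine, and there is no real obstacle; the whole content of the lemma is the chain "number of orbits $=$ multiplicity of the trivial in $\ad$ $=\sum_i m_i^2$."
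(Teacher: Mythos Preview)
Your proof is correct and follows essentially the same approach as the paper: both identify the orbit count with the dimension of the $G$-invariants in $\mathcal B(\mathcal H)$ under conjugation (the paper calls this the commutant; you phrase it as the multiplicity of the trivial representation in $\ad\cong\rho\otimes\rho^*$), and both then invoke Schur's lemma to get $\sum_i m_i^2$. Your version is simply more explicit about the permutation-representation identification and the projective-versus-linear bookkeeping, whereas the paper compresses the whole argument into two sentences.
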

\begin{remark}
This lemma  follows from the same  reasoning as  the proof of Lemma 7.2 in the author's thesis \cite{Zhu12the} (see also  \rscite{Zhu15Sh,Zaun11}).
Here we present a simpler proof.
\end{remark}
\begin{proof}
The  number of orbits  of
$\overline{G}$ on the basis elements is equal to the dimension of the commutant (in the operator space)
of $G$. According to representation theory, this dimension is equal to the
sum of squared multiplicities of all
inequivalent irreducible components of $\overline{G}$.
\end{proof}

\begin{lemma}\label{lem:2transitiveStab}
A group covariant operator basis is  supersymmetric if and only if the  stabilizer (within the symmetry group)  of each basis element 
has two  irreducible components,  which are inequivalent.
\end{lemma}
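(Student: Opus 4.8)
The plan is to convert "supersymmetric" into a statement about orbit counts for the point stabilizer, and then feed that count into \lref{lem:orbits}. Fix a basis element $F_j$ and let $\overline{H}$ denote its stabilizer inside the symmetry group of the basis. Since the basis is group covariant, its symmetry group $\overline{G}$ acts transitively on the basis elements, so all point stabilizers are conjugate; conjugation by a symmetry carries the decomposition of the Hilbert space into irreducible components for one stabilizer to that for another, so it suffices to treat a single $F_j$. Recall the standard fact from permutation group theory that a transitive permutation group is $2$-transitive if and only if a point stabilizer acts transitively on the remaining points, i.e.\ has exactly two orbits on the whole set, namely the fixed element and its complement. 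Hence $\overline{G}$ is supersymmetric if and only if $\overline{H}$ has exactly two orbits on the basis elements.

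Now $\overline{H}$ is again a subgroup of the symmetry group of the basis, so \lref{lem:orbits} applies to it: the number of orbits of $\overline{H}$ on the basis elements equals $\sum_i m_i^2$, where $m_1,m_2,\dots$ are the multiplicities of the pairwise inequivalent irreducible components of $\overline{H}$ acting on the Hilbert space. Combining this with the previous paragraph, $\overline{G}$ is supersymmetric if and only if $\sum_i m_i^2 = 2$. Since the $m_i$ are positive integers, the only solution is two summands with $m_1 = m_2 = 1$; equivalently, the Hilbert space decomposes into exactly two irreducible components under $\overline{H}$, and these are inequivalent. Reading the same computation backwards gives the converse: two inequivalent irreducible components yield $\sum_i m_i^2 = 1 + 1 = 2$ orbits, hence $2$-transitivity.

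I do not anticipate a genuine obstacle here; the argument is essentially the dictionary "rank-$2$ transitive group $\leftrightarrow$ point stabilizer with two orbits" passed through \lref{lem:orbits}. The only points that need a line of justification are the reduction from "each basis element" to "one basis element" (which uses group covariance together with the conjugation-invariance of the multiplicity pattern), the verification that \lref{lem:orbits} is legitimately applicable to $\overline{H}$ (a point stabilizer of an operator basis is still a subgroup of that basis's symmetry group, which is exactly the hypothesis of \lref{lem:orbits}), and the elementary arithmetic remark that $2$ has a unique representation as a sum of squares of positive integers.
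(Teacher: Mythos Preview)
Your proposal is correct and follows essentially the same route as the paper: translate supersymmetry into the point stabilizer having exactly two orbits on basis elements, then invoke \lref{lem:orbits} to convert the orbit count into the multiplicity condition $\sum_i m_i^2=2$. The paper's proof is terser---it compresses the orbit--irreducibility dictionary into a single clause in each direction---but the logic is identical, and your version makes explicit the conjugation argument reducing ``each basis element'' to ``one basis element'' as well as the arithmetic forcing $m_1=m_2=1$.
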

\begin{proof}
The stabilizer of each basis element of a supersymmetric operator  basis acts transitively on the remaining basis
elements, so it has two orbits  and thus  two inequivalent irreducible components. Conversely, if the stabilizer has  two inequivalent irreducible components, then it has two orbits on the basis elements, so the basis is supersymmetric. 
\end{proof}

\begin{proof}[Proof of \thref{thm:2designSSY}]
Let $\{F_j\}$ be an operator basis for the  space $\mathcal{H}$ with symmetry group
$\overline{G}$. Then $\{F_j\otimes F_k\}$ is an operator basis for $\mathcal{H}\otimes
\mathcal{H}$.
The basis $\{F_j\}$ is supersymmetric if and only if the group $\overline{G}_{(2)}:=\{U\otimes U|U\in \overline{G}\}$
has two orbits on $\{F_j\otimes F_k\}$.  According to \lref{lem:orbits},
the latter is equivalent to the condition that $\overline{G}_{(2)}$ has two inequivalent
irreducible components, which holds if and only if   $\overline{G}$ is a unitary 2-design \cite{GrosAE07, Zhu15MC}.
\end{proof}

\begin{proof}[Proof of \thref{thm:WignerClifford}]

Let $\{F_j\}$ be an operator basis that is covariant with respect to the Clifford group; then it is   also covariant with respect to the HW group. The stabilizer $\overline{S}$ of each basis element, say $F_1$, forms a complement of the HW group within the Clifford group and is isomorphic to the symplectic group $\Sp{2n}{p}$. In addition, $\{F_j\}$ is necessarily supersymmetric, so  $\overline{S}$ has two inequivalent irreducible components according to \lref{lem:2transitiveStab}.

When   $p$ is odd, $\overline{S}$ must contain a  principal
involution $\overline{U}$ in its center. With a suitable choice of the phase
factor, $U$ is a phase point operator. Since $U$ has nonzero trace,  it follows
that all elements of $S$ commute with $U$. Therefore,
the two irreducible components of $\overline{S}$ correspond to the two eigenspaces
of $U$. Accordingly, $F_1$ is a linear combination of the two projectors onto
the two eigenspaces or, equivalently, a linear combination of $U$ and the
identity; so $\{F_j\}$ is a Wigner-Wootters basis.

When $p=2$ and $n=1$, it is straightforward to verify that any supersymmetric  operator basis is a tetrahedron basis, which cannot be covariant with respect to the Clifford group. When $p=2$ and  $n\geq2$, the HW group
 is not complemented in the  Clifford group according to theorem~7 in \rcite{BoltRW61II}, so  the Clifford group  has no subgroup isomorphic to $\Sp{2n}{2}$.
When $n\geq4$, this conclusion also follows from the fact that the
minimal degree of  nontrivial irreducible projective representations of $\Sp{2n}{2}$
is $(2^n-1)(2^{n-1}-1)/3$ according to Table~II in \rcite{TiepZ96},  which is larger than $2^n$. Consequently, no operator basis can be covariant with respect to the Clifford
group.
\end{proof}

To prove  \thref{thm:SuperFrame}, we need to introduce additional tools.  The following theorem establishes a deep connection between permutation symmetry and the HW group, which  is of independent interest.\begin{theorem}[CFSG]\label{thm:SuperFrameHW}
Every supersymmetric operator frame is covariant with  a multipartite HW group; its  symmetry group is a subgroup of  the  Clifford group.
\end{theorem}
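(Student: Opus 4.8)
The plan is to combine the structural features of supersymmetric operator bases developed above with the classification of finite $2$-transitive permutation groups. Write $\mathcal H=\mathbb C^{d}$, let $\{F_j\}_{j=1}^{N}$ be a supersymmetric operator frame, and let $\overline G$ be its symmetry group, a subgroup of the projective unitary group $\mathrm{PU}(d)$ acting $2$-transitively on the $N$ frame elements. The first step is to show that the frame is automatically minimal, $N=d^{2}$, and that $\overline G$ is finite. The assignment $e_j\mapsto F_j$ extends to a surjective $\overline G$-module homomorphism from the $N$-dimensional permutation module onto the operator space $\mathcal B(\mathcal H)\cong\mathcal H\otimes\mathcal H^{*}$, on which $\overline G$ acts by conjugation. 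Because $\overline G$ is $2$-transitive, the permutation module is the direct sum of the trivial module and an irreducible module of dimension $N-1$; since $\mathcal B(\mathcal H)$ equals the image and contains the conjugation-fixed identity, its trivial isotypic component is at most one-dimensional, hence equals $\mathbb C\mathbf I$, and the traceless part $\mathcal B_{0}(\mathcal H)$ is an irreducible quotient of the $(N-1)$-dimensional module. By irreducibility this quotient is either zero, which forces $d=1$, or the whole module, so $d^{2}-1=N-1$ and $\mathcal B_{0}(\mathcal H)$ is $\overline G$-irreducible. Moreover, any projective unitary fixing every $F_j$ acts trivially on their span $\mathcal B(\mathcal H)$ and is therefore trivial in $\mathrm{PU}(d)$, so $\overline G$ embeds in $\mathrm{Sym}(N)$ and is finite. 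In particular $\{F_j\}$ is an operator basis and $\overline G$ is a unitary $2$-design by \thref{thm:2designSSY}, but what we really use below is the explicit permutation-module picture.

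Next I would invoke the classification of finite $2$-transitive groups, which rests on CFSG: $\overline G$ is either of \emph{affine type}, with an elementary abelian regular normal subgroup $\overline T\cong\bbF_p^{m}$ equal to its socle, where $p^{m}=N=d^{2}$; or of \emph{almost simple type}, with $S\trianglelefteq\overline G\le\Aut(S)$ for a nonabelian simple socle $S$. The second alternative must be excluded. Since $S\le\overline G\le\mathrm{PU}(d)$, the simple group $S$ admits a faithful projective representation of degree $d=\sqrt N$. Going through the $2$-transitive almost simple groups --- the alternating and symmetric groups, the $\mathrm{PSL}$ and $\mathrm{PSU}_{3}$ families, the Suzuki and Ree groups, $\Sp{2m}{2}$ in its two orthogonal actions, and the finitely many sporadic examples (the Mathieu groups among them) --- and retaining only those whose degree $N$ is a perfect square, one checks against the known tables of minimal degrees of faithful projective representations of the finite simple groups and their Schur covers that $\sqrt N$ always lies strictly below the minimal degree, a contradiction; for instance $\Sp{6}{2}$ acting on $36$ points would require a degree-$6$ representation, whereas its minimal faithful projective degree is larger. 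A point to watch is that the superficially small candidates, such as $\PSL{2}{3}$, $\PSU{3}{2}$, and the degree-$4$ group $\mathrm{S}_{4}\cong\AGL{2}{2}$, are \emph{not} almost simple: their socles are elementary abelian, so they belong to the affine family and account, respectively, for the Hesse configuration in dimension $3$ and for dimension $2$.

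It then remains to analyze the affine case. Since $d^{2}=p^{m}$ is a prime power, $m=2n$ and $d=p^{n}$, so $\overline T\cong\bbF_p^{2n}$. Write $\pi$ for the projective representation of $\overline T$ on $\mathcal H$; conjugation on $\mathcal B(\mathcal H)$ is then the honest representation $\pi\otimes\bar\pi$. Because $\overline T$ acts regularly on the $d^{2}$ frame elements, the restriction of the permutation module to $\overline T$ is the regular representation, so $\mathcal B(\mathcal H)$ carries each of the $d^{2}$ characters of $\overline T$ with multiplicity exactly one. Were $\pi$ a direct sum of $k\ge 2$ projective irreducibles, $\pi\otimes\bar\pi$ would contain the trivial character at least $k$ times --- once from each diagonal block $\rho_i\otimes\bar\rho_i$ --- contradicting multiplicity one; hence $\pi$ is a projectively irreducible $p^{n}$-dimensional representation of $\bbF_p^{2n}$. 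Such a representation is governed by its commutator pairing, which must be a nondegenerate alternating $\bbF_p$-bilinear form; at such a class the twisted group algebra of $\bbF_p^{2n}$ is isomorphic to $M_{p^{n}}(\mathbb C)$ and has a unique irreducible module, and all nondegenerate alternating forms on $\bbF_p^{2n}$ are equivalent, so this module is projectively equivalent to the multipartite Heisenberg--Weyl representation. Conjugating by an intertwining unitary, we may assume $\overline T=\phw$ is a multipartite HW group, so the frame is HW-covariant; and since $\overline T$ is normal in $\overline G$, every element of $\overline G$ normalizes $\phw$, that is, $\overline G\le\pc$. The main obstacle is the almost-simple elimination in the middle paragraph: it genuinely invokes CFSG, through the classification of $2$-transitive groups, together with accurate minimal-degree data for the finite simple groups and their covers, and it requires care not to mistake the small solvable-socle examples for almost simple groups.
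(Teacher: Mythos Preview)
Your proposal is correct and follows essentially the same route as the paper's proof: both reduce to the Burnside dichotomy for $2$-transitive groups, eliminate the almost-simple socle via CFSG and minimal projective degrees, and identify the regular elementary-abelian socle with the HW group to conclude $\overline G\le\pc$. The only cosmetic differences are that the paper obtains minimality of the frame from the Gram matrix rather than your permutation-module argument, and deduces irreducibility of the socle directly from \lref{lem:orbits} rather than your multiplicity count in $\pi\otimes\bar\pi$.
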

\begin{proof}
The proof is similar to  the proof of Theorem~4 (concerning super-SICs) in \rcite{Zhu15S}.
Suppose $\{F_j\}$ is a supersymmetric operator frame  in dimension $d$ with symmetry group $\overline{G}$. Then $\{F_j\}$  has $d^2$ elements, and $\overline{G}$ acts doubly transitively on the frame elements.  According to Burnside's theorem on 2-transitive permutation groups \cite{DixoM96book, Came99book, Zhu15S}, $\overline{G}$ has a unique  minimal normal subgroup, say $\overline{N}$, which
is either regular   elementary Abelian or primitive non-Abelian simple. In either case, $\overline{N}$  is irreducible according to \lref{lem:orbits}. The latter case can be excluded by the same reasoning as in the proof of Lemma~14 in \rcite{Zhu15S} (involving the CFSG). In the former case, $\overline{N}$ is a faithful irreducible projective representation of an elementary Abelian group, so it is  (projectively) unitarily   equivalent to  the   HW group in a prime power dimension (see Lemma~1 in \rcite{Zhu15S}). Since  $\overline{N}$ is normal in $\overline{G}$, it follows  that  $\overline{G}$  is a subgroup of the   Clifford group.
\end{proof}

Let $\{F_j\}$ be a supersymmetric operator frame in dimension $q=p^n$ and $\overline{S}$  the stabilizer of one of the frame elements, say $F_1$. According to \thref{thm:SuperFrameHW},  $\overline{S}$ can be identified with a transitive subgroup of the Clifford group, where "transitive" means that $\overline{S}$ acts (by conjugation) transitively on nontrivial displacement operators. In addition, $\overline{S}$ has trivial intersection with the HW group, so it is isomorphic to a transitive subgroup of $\Sp{2n}{p}$ (a subgroup that acts transitively on nonzero vectors in $\bbF_p^{2n}$). To establish our main result, it is therefore crucial to figure out transitive subgroups of $\Sp{2n}{p}$. Transitive linear groups have already been  classified by Hering \cite{Heri85} and Liebeck \cite{Lieb87}; see also \rcite{DixoM96book} and Table 7.3 in \rcite{Came99book}.
Quite surprisingly, barring a few exceptions in low dimensions, these transitive subgroups divide into a few families, which have pretty simple structures. For odd $p$  in particular, all we need is the following simple fact; see Supplementary Material \cite{Zhu16PS}.

\begin{lemma}[CFSG]\label{lem:TransitiveOdd}
Any transitive  subgroup $H$ of $\Sp{2n}{p}$ for odd prime $p$ contains the central involution.
\end{lemma}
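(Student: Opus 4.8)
The plan is to appeal to the classification of transitive linear groups (Hering, Liebeck) and then read off where the central involution $-1$ sits. Since $H$ acts transitively on the $p^{2n}-1$ nonzero vectors of $\bbF_p^{2n}$, it is a transitive linear group of degree $2n$, so it is, up to conjugacy, one of the following: (i) contained in $\GaL{1}{p^{2n}}$; (ii) a group with $\SL{a}{q_0} \trianglelefteq H \le \GaL{a}{q_0}$ where $q_0 = p^b$, $ab = 2n$, $a \ge 2$; (iii) a group with $\Sp{a}{q_0} \trianglelefteq H$ where $q_0 = p^b$, $ab = 2n$, $a$ even $\ge 4$; (iv) a group with $\mathrm{G}_2(q_0)' \trianglelefteq H$, which forces $p = 2$; or (v) one of finitely many exceptional groups, occurring only for $2n \in \{2,4,6\}$ over small odd prime fields. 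Because $p$ is odd, family (iv) does not occur, and I would treat the rest one family at a time, invoking the extra hypothesis $H \le \Sp{2n}{p}$ only where it is needed.

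Families (iii) and (ii) are almost immediate. In case (iii), $-\mathrm{id}$ lies in $\Sp{a}{q_0}$ for every even $a$ (as $q_0$ is odd), and under the restriction-of-scalars embedding $\Sp{a}{q_0} \hookrightarrow \GL{2n}{p}$ this $-\mathrm{id}$ is exactly the scalar $-1 \in \Sp{2n}{p}$, so $-1 \in H$. In case (ii), $\SL{a}{q_0} \le H \le \Sp{2n}{p}$, so the natural $\SL{a}{q_0}$-module, restricted to $\bbF_p$, carries a nonzero invariant alternating $\bbF_p$-form; the standard fact that for $a \ge 3$ the natural module of $\SL{a}{q_0}$ is not isomorphic to its dual, nor to any Frobenius twist of its dual, forces $a = 2$. (This, together with case (i), is the only place where the symplectic hypothesis enters.) Since $q_0$ is odd, $-1 = -\mathrm{id} \in \SL{2}{q_0} \subseteq H$.

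Case (i) carries the one genuine computation. Writing $\Phi = \Gal(\bbF_{p^{2n}}/\bbF_p)$, so that $H \le \GaL{1}{p^{2n}} = \bbF_{p^{2n}}^{*} \rtimes \Phi$, and $C := H \cap \bbF_{p^{2n}}^{*}$ for the cyclic "linear part", the orbit of $1 \in \bbF_{p^{2n}}$ has stabilizer $H \cap \Phi$, so transitivity gives $|H| = (p^{2n}-1)\,|H \cap \Phi|$; combining this with the inclusions $H \cap \Phi \hookrightarrow H/C \hookrightarrow \Phi$ yields $|C| = (p^{2n}-1)/k$ for some divisor $k$ of $2n$. By the lifting-the-exponent lemma, $v_2(p^{2n}-1) = v_2(p-1) + v_2(p+1) + v_2(2n) - 1$, and since $p$ is odd one of $p \pm 1$ is divisible by $4$, so $v_2(p-1) + v_2(p+1) \ge 3$; hence $v_2(|C|) = v_2(p^{2n}-1) - v_2(k) \ge 2 > 0$. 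Thus $C$ is cyclic of even order, and therefore contains the unique involution of $\bbF_{p^{2n}}^{*}$, namely $-1$.

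The remaining work is a finite check of family (v). For $2n = 2$ nothing is needed: $|H|$ is divisible by $p^2 - 1$ and hence even, so $H$ contains an involution, and in $\Sp{2}{p} = \SL{2}{p}$ the only involution is $-I$. For $2n = 4$ and $2n = 6$ (over the relevant small odd prime fields) every exceptional transitive linear group that can sit inside $\Sp{2n}{p}$ has a normal subgroup isomorphic to $Q_8$, to the extraspecial group $2^{1+4}$, or to $\SL{2}{5}$ or $\SL{2}{13}$, each of whose centre contains $-\mathrm{id}$; the potentially troublesome groups with trivial or odd-order centre (such as $S_4$, or a normalizer of an extraspecial group $3^{1+2}$) do not arise, either because they fail to be transitive inside $\Sp{2n}{p}$ or because they live only in odd dimension. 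I expect the main obstacle to be precisely this bookkeeping — matching the symplectic hypothesis against the Hering–Liebeck tables and disposing of the low-dimensional exceptions — rather than any isolated hard idea; the representation-theoretic input in case (ii) and the $2$-adic estimate in case (i) are the only pieces of real content.
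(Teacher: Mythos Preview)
Your argument follows the same Hering--Liebeck framework as the paper but diverges tactically at three points. In case (i) you prove directly, via a $2$-adic/LTE estimate, that any transitive $H\le\GaL{1}{p^{2n}}$ already contains $-1$; the paper instead shows this case is \emph{vacuous} inside $\Sp{2n}{p}$, since any Singer cyclic subgroup of $\GL{2n}{p}$ meets $\Sp{2n}{p}$ in a group of order at most $p^n+1$, forcing $|H|\le 2n(p^n+1)<p^{2n}-1$. (So your parenthetical that the symplectic hypothesis ``enters'' in case (i) is off: your own argument there never uses it.) In case (ii) you exclude $a\ge 3$ by non-self-duality of the natural module under Frobenius twists; the paper instead notes that a Singer cycle of $\SL{m}{p^k}$ has order $(p^{2n}-1)/(p^k-1)$, an order absent from $\Sp{2n}{p}$. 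Both routes are valid; the paper's avoids the module-theoretic bookkeeping, while yours is more conceptual.

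For the exceptional cases your treatment is correct in spirit but leaves one step implicit: knowing that $2^{1+4}$, $\SL{2}{5}$, or $\SL{2}{13}$ has a central involution \emph{as an abstract group} does not by itself identify that involution with $-I\in\Sp{2n}{p}$. The missing sentence is that the central involution $z$ of the normal subgroup $N$ is fixed under $H$-conjugation (since $|Z(N)|=2$), hence $z$ centralizes the irreducible group $H$; Schur's lemma then places $z$ in a finite extension field of $\bbF_p$, whose unique involution is $-1$. The paper handles this identification by a different and rather elegant device: $|H|$ is divisible by a Zsigmondy prime of $p^{2n}-1$, so $H$ contains a Zsigmondy cycle, whose centralizer in $\Sp{2n}{p}$ is a Singer cyclic group with $-I$ as its only involution; since $z$ commutes with this cycle, $z=-I$.
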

\begin{remark}
Besides its significance to the present study, this lemma is very useful in understanding the structure of unitary 2-designs, which are  intimately connected with transitive subgroups of the symplectic group \cite{Gros06, Zhu15MC,Zhu15U}.
\end{remark}

\begin{proof}[Proof of \thref{thm:SuperFrame}]
 Any supersymmetric frame $\{F_j\}$ in dimension $d$ is necessarily minimal and  forms an operator basis.  According to \thref{thm:SuperFrameHW}, $d$ is  a prime power $q=p^n$,   the frame $\{F_j\}$ is covariant with respect to the  HW group, and its symmetry group is a subgroup of the Clifford group that acts doubly transitively on  frame elements. Let $\overline{S}$ be the stabilizer (within the Clifford group) of the  frame element $F_1$, then $\overline{S}$  is  transitive and it induces a transitive subgroup $H$ of $\Sp{2n}{p}$. In addition, $\overline{S}$ has two inequivalent irreducible components by \lref{lem:2transitiveStab}.

When $p$ is odd, $H$ contains the central involution of $\Sp{2n}{p}$  according to \lref{lem:TransitiveOdd}, so that $\overline{S}$ contains a  principal involution $\overline{U}$ in its center. According to the same reasoning as in the proof of \thref{thm:WignerClifford}, $\{F_j\}$ is a Wigner-Wootters frame.

When  $p=2$, let $h$ be the smaller degree of the two irreducible components of $\overline{S}$. If $h=1$, then  the projector onto the irreducible component of degree 1 is a fiducial state for a super-SIC. According to \rcite{Zhu15S} (see also \rcite{GodsR09}), the only super-SICs in even prime power dimensions are the SIC in dimension 2 and the set of Hoggar lines in dimension 8. Therefore, $\{F_j\}$
is either a tetrahedron frame or a Hoggar frame. It remains to consider the scenario $2\leq h\leq 2^n/2$   with  $n\geq2$.
Calculation shows that the minimal degree of  nontrivial irreducible projective representations of $H$  is always larger than $2^n/2$ except when  $n=2$ and $H$ is isomorphic to $\Sp{2}{2^2}\simeq\SL{2}{2^2}$; see the supplementary material.
In the exceptional case, any subgroup of the Clifford group  that is  isomorphic to $\SL{2}{2^2}$ is either transitive and irreducible, or nontransitive and reducible.
\end{proof}

\Thsref{thm:WignerClifford} and \ref{thm:SuperFrame}  imply that the Hesse SIC is the unique Clifford covariant SIC; cf.~\rscite{Zhu10,Zhu15S}.  \Thref{thm:SuperFrame} generalizes  a recent  result of the author on super-SICs~\cite{Zhu15S} to  POVMs that are not necessarily rank 1.
\begin{theorem}[CFSG]
Any supersymmetric POVM is a Wigner-Wootters POVM except for tetrahedron POVMs in dimension 2 and Hoggar POVMs in dimension 8.
\end{theorem}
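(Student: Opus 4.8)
The plan is to obtain the statement as a consequence of \thref{thm:SuperFrame}. Throughout, a POVM is taken to be informationally complete, hence an operator frame; this is the setting relevant to quasiprobability representations, and it is genuinely needed here, since, e.g., a rank-$1$ von Neumann measurement $\{\outer{j}{j}\}_j$ in a dimension $d\ge 2$ is supersymmetric in the sense of the paper (its symmetry group surjects onto the $2$-transitive group $S_d$) but is not a frame and not of the asserted form. With this understanding, a supersymmetric POVM is a supersymmetric operator frame, so \thref{thm:SuperFrame} applies: it is unitarily equivalent to a Wigner-Wootters frame $\{a+bV_\mu\}$, or --- in dimension~$2$ --- to a tetrahedron frame, or --- in dimension~$8$ --- to a Hoggar frame.

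It remains to push the POVM property through the unitary equivalence. Conjugation $E_j\mapsto UE_jU^\dag$ preserves positivity and preserves $\sum_j E_j=I$, so the frame delivered by \thref{thm:SuperFrame} is again a POVM; under the convention extending the names Wigner-Wootters, tetrahedron, and Hoggar from frames to POVMs, this makes it a Wigner-Wootters, tetrahedron, or Hoggar POVM, and the original POVM is unitarily equivalent to it. One can moreover pin down the admissible constants. For a Wigner-Wootters frame, $\tr(V_\mu)=1$ together with orthogonality of the $V_\mu$ gives $I=q^{-1}\sum_\mu V_\mu$, whence $\sum_\mu(a+bV_\mu)=(q^{2}a+qb)I$; normalization forces $q^{2}a+qb=1$, and positivity of $a+bV_\mu$, whose eigenvalues are $a\pm b$, forces $a\ge|b|$. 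In dimension~$2$, with $\sum_j\Pi_j=2I$ for a SIC, this becomes $4a+2b=1$ and $a,a+b\ge 0$; in dimension~$8$, $64a+8b=1$ and $a,a+b\ge 0$. Each system is solvable, so none of the three families is empty.

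Essentially all the content lies in \thref{thm:SuperFrame} and what it rests on --- Burnside's theorem on $2$-transitive groups, the identification of the regular normal subgroup with a multipartite HW group (\thref{thm:SuperFrameHW}), and, for $p=2$, the Hering--Liebeck classification of transitive linear groups together with the classification of super-SICs in \rcite{Zhu15S} --- which is why the statement is labelled CFSG. Given that, the present theorem is a formality: the only points requiring any care are the informational-completeness hypothesis noted above and the elementary fact that the normalization and positivity constraints can be met simultaneously.
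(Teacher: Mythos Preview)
Your approach is exactly the paper's: the theorem is stated immediately after \thref{thm:SuperFrame} as a direct specialization to POVMs, with no separate proof given. Your added observations---that informational completeness is genuinely required (else a von Neumann basis is a counterexample) and that the normalization and positivity constraints are simultaneously satisfiable---are correct and fill in details the paper leaves implicit.
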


In summary, we  showed that  the Wootters discrete Wigner function is almost uniquely characterized by the permutation symmetry appearing in the continuous analogue. This permutation symmetry amounts to the requirement that the symmetry group of the underlying operator basis is a unitary 2-design.  In addition, this Wigner function  is  the only choice that is Clifford covariant in any odd prime power dimension, while no such Wigner function can exist in any even prime power dimension.
Our study settles several long-standing open problems on discrete Wigner functions, including the one concerning the existence and uniqueness of Clifford covariant discrete Wigner functions. Our study also establishes surprising connections among various fascinating subjects: such as discrete Wigner functions, unitary 2-designs, Clifford groups, and permutation symmetry. These results provide valuable insight  on   a number of fundamental issues in quantum information and quantum foundations, in particular the distinction between qubit and qudit stabilizer formalisms.
In addition, the technical tools developed in this paper are useful for studying  many discrete symmetric structures behind finite-state quantum mechanics, such as symmetric POVMs, mutually unbiased bases, and unitary $t$-designs. An interesting problem left open is whether similar results hold in the continuous scenario.

\section*{Acknowledgments}
The author is grateful to Nick Gill for introducing  Singer cycles and Zsigmondy primes, and to Robert Spekkens,  Mark Howard, and Joseph Emerson for discussions. This work is supported by Perimeter Institute for Theoretical Physics. Research at Perimeter Institute is supported by the Government of Canada through Industry Canada and by the Province of Ontario through the Ministry of Research and Innovation.  The author also acknowledges financial support
from the Excellence Initiative of the German Federal and State Governments
(ZUK 81) and the DFG.

\end{cbunit}

\clearpage

\setcounter{equation}{0}
\setcounter{figure}{0}
\setcounter{table}{0}
\setcounter{theorem}{0}
\setcounter{lemma}{0}
\setcounter{remark}{0}

\makeatletter
\renewcommand{\theequation}{S\arabic{equation}}
\renewcommand{\thefigure}{S\arabic{figure}}
\renewcommand{\thetable}{S\arabic{table}}
\renewcommand{\thetheorem}{S\arabic{theorem}}
\renewcommand{\thelemma}{S\arabic{lemma}}
\renewcommand{\theremark}{S\arabic{remark}}


\begin{cbunit}

 \onecolumngrid
 \begin{center}
 \textbf{\large Supplementary Material for \\ Permutation Symmetry Determines  the Discrete Wigner Function}
 \end{center}
 \twocolumngrid
 
 In this supplementary material, we prove Lemma~3 in the main text and show that the minimal degree of nontrivial irreducible
projective representations of any transitive subgroup $H$ of $\Sp{2n}{2}$
 with $n\geq 2$ is always larger than $2^n/2$ except when $n=2$  and $H$
is isomorphic to $\Sp{2}{2^2}$. Both results are needed in the proof of Theorem~3. To this end, we need to list  transitive linear groups on vector spaces over finite fields  as determined by Hering
\cite{Heri85} and Liebeck \cite{Lieb87}; see also \rcite{DixoM96book} and
Table 7.3 in \rcite{Came99book} (note that case~8 in the following lemma is missing in the table but appears in \rcite{Lieb87}).

\begin{lemma}[CFSG]\label{lem:TransitiveGen}
Any transitive subgroup $H$ of $\GL{2n}{p}$ with $n\geq 2$ satisfies one of
the conditions below:
\begin{enumerate}\itemsep-0.5ex
\item $\Sp{2m}{p^k} \leq H$ with  $1\leq m\leq n$ and $mk=n$.

\item $\SL{m}{p^k}\leq H\leq \GaL {m}{p^k}$ with $m\geq3$ and $mk=2n$, where
$\GaL {m}{p^k}$ is the general semilinear group.

\item $H\leq \GaL {1}{p^{2n}}$.

\item $p=2$, $n=3k$, and $G_2(p^k)\trianglelefteq H$, where  $G_2(p^k)$ is
an exceptional group of Lie type \cite{Wils09book}.

\item $p=2$, $n=2$,  and $H\simeq A_6$ or $H\simeq A_7$, where $A_6$ and
$A_7$ are alternating groups of degrees 6 and 7.

\item $p=2$, $n=3$, and $H\simeq \PSU{3}{3}$.

\item $p=3$, $n=2$, and  $2^{1+4}\trianglelefteq H$, where $2^{1+4}$ denotes
an extraspecial 2-group \cite{KurzS04book} of this order.

\item $p=3$, $n=2$, and  $\SL{2}{5}\trianglelefteq H$.

\item $p=3$, $n=3$, and $H\simeq \SL{2}{13}$.
\end{enumerate}
\end{lemma}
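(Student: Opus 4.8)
The plan is to derive \lref{lem:TransitiveGen} from the classification of finite $2$-transitive permutation groups of affine type, due to Hering~\cite{Heri85}, together with the precise list of small-dimensional exceptions given by Liebeck~\cite{Lieb87}; this classification rests on the CFSG, which is why the lemma is flagged accordingly. The first step is the standard translation: a subgroup $H\leq\GL{2n}{p}$ acting transitively on the $p^{2n}-1$ nonzero vectors of $V=\bbF_p^{2n}$ is exactly the point stabilizer of a $2$-transitive permutation group $G=V\rtimes H$ of affine type on the $p^{2n}$ points of $V$, because the translations form a regular normal subgroup and $2$-transitivity of $G$ amounts to transitivity of $H$ on $V\setminus\{0\}$. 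Hering's theorem then enumerates such $H$: either $H$ normalizes a classical group acting on its natural module over a subfield $\bbF_{p^k}$ of $\bbF_{p^{2n}}$, or $H$ normalizes the exceptional group $G_2(2^k)$ acting on its $6$-dimensional module (only when $p=2$), or $H\leq\GaL{1}{p^{2n}}$, or $(p,n,H)$ is one of finitely many exceptions in small dimension.

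The second step is to specialize this list to the even-dimensional module $\bbF_p^{2n}$ and to rewrite the field-extension bookkeeping in the claimed form. The symplectic case becomes $\Sp{2m}{p^k}\leq H$ with $mk=n$ (case~1), since the natural $\Sp{2m}{p^k}$-module has $\bbF_p$-dimension $2mk=2n$; the linear case becomes $\SL{m}{p^k}\leq H\leq\GaL{m}{p^k}$ with $mk=2n$ (case~2), where one may take $m\geq3$ because $\SL{2}{p^k}=\Sp{2}{p^k}$ is already subsumed in case~1; the $G_2$ case forces $6k=2n$, i.e.\ $n=3k$ with $p=2$ (case~4); and the semilinear case is case~3 verbatim. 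The remaining configurations --- $H\simeq A_6$ or $A_7$ when $p=2$, $n=2$; $H\simeq\PSU{3}{3}$ when $p=2$, $n=3$; $2^{1+4}\trianglelefteq H$ and $\SL{2}{5}\trianglelefteq H$ when $p=3$, $n=2$; and $H\simeq\SL{2}{13}$ when $p=3$, $n=3$ --- are read off from Liebeck's tables, giving cases~5--9. One also verifies that each listed family does act transitively on $V\setminus\{0\}$ (classical for $\Sp{2m}{p^k}$ and $\SL{m}{p^k}$ on their natural modules, inherited by $G_2(2^k)\leq\Sp{6}{2^k}$, and true by construction for the sporadic entries), so the list is an exact description rather than merely a collection of necessary inclusions.

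The only genuine subtlety, as opposed to a deep obstacle, is the faithful transcription of the exceptional cases: the commonly cited summary (Table~7.3 of \rcite{Came99book}; see also \rcite{DixoM96book}) inadvertently omits case~8, the configuration $\SL{2}{5}\trianglelefteq H\leq\GL{4}{3}$, so one must return to \rcite{Lieb87} to reinstate it. Beyond this, no new argument is needed: \lref{lem:TransitiveGen} is exactly Hering's and Liebeck's classification of transitive linear groups, restricted to spaces of even dimension over a prime field and repackaged in the form required for the proof of Theorem~3.
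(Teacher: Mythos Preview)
Your proposal is correct and matches the paper's treatment: the paper does not prove \lref{lem:TransitiveGen} at all but simply records it as the Hering--Liebeck classification of transitive linear groups, citing \cite{Heri85,Lieb87,DixoM96book,Came99book} and noting (exactly as you do) that case~8 is omitted from Cameron's Table~7.3 but present in \cite{Lieb87}. Your write-up is essentially an expanded gloss on those citations, with the standard passage from transitive linear groups to $2$-transitive affine groups and the bookkeeping that specializes the general list to even $\bbF_p$-dimension $2n$; there is nothing to correct or compare.
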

\begin{remark}
Here  $\Sp{2m}{p^k}$, $\SL{m}{p^k}$, $\GaL {m}{p^k}$, and $G_2(p^k)$ are
extension-field-type subgroups of  $\GL{2n}{p}$. Complete classification
is also available in the case  $n=1$ but is not necessary here.
Quite surprisingly, barring a few exceptions in the cases $p=2,3$ and $n=2,3$, there are only three families of transitive subgroups for odd prime~$p$ and four families for $p=2$. Moreover, for subgroups of
$\Sp{2n}{p}$, cases 2 and 3 cannot happen  according to  the proof of Lemma~3 below. So there is only one (two) family of transitive subgroups for odd (even) prime~$p$.  
\end{remark}


Before proving Lemma~3, we need to introduce several additional concepts.
A \emph{Singer cyclic group} of a classical group over a finite field is
an irreducible cyclic subgroup of maximal order \cite{Hupp70, Bere00}. Any
generator of a Singer cyclic group is called a \emph{Singer cycle}. Singer
cyclic groups of $\GL{n}{p}$,  $\SL{n}{p}$, and $\Sp{2n}{p}$ have orders
$p^n-1$, $(p^n-1)/(p-1)$, and $p^n+1$, respectively. In all these cases,
all Singer cyclic groups of a given group are conjugate to each other. We
are only concerned with Singer cyclic groups (cycles) of $\Sp{2n}{p}$ except
when stated otherwise.
Let $a,b$ be positive integers with $b>1$. A prime~$r$ dividing  $b^a-1$
is a \emph{Zsigmondy prime} (also known as  primitive prime divisor) \cite{Zsig1892,Roit97}
if $r$ does not divide $b^j-1$ for $j=1, 2,\ldots, a-1$.
It is known that $b^a-1$ has a Zsigmondy prime except when $(b,a)=(2,6)$,
or $a=2$ and $b+1$ is a power of 2.
An element  of $\Sp{2n}{p}$ is a \emph{Zsigmondy cycle} if its order is 
a Zsigmondy prime of $p^{2n}-1$. The group generated by a Zsigmondy cycle
is called a Zsigmondy cyclic group. All Zsigmondy cyclic groups of a given
order are conjugate to each other.
The centralizer of a Zsigmondy cycle is a Singer cyclic group. When $p$ is
odd,  the centralizer  contains a unique involution, that is, the central
involution  of $\Sp{2n}{p}$ \cite{Hupp70, Shor92book, Bere00, Zhu15Sh}.

Now we are ready to prove Lemma~3 in the main text that any transitive  subgroup $H$ of $\Sp{2n}{p}$ for odd prime~$p$ contains the
central involution.
\begin{proof}[Proof of Lemma~3]

When $n=1$,  the group   $\Sp{2n}{p}\simeq\SL{2}{p}$ has a unique involution
\cite{Zhu10}. Since $H$ has even order, it must contain the  involution.

When  $n\geq2$, the group $H$ must satisfy one of the conditions 1, 2, 3,
7, 8, 9 in \lref{lem:TransitiveGen}.
We shall show that $H$ contains the central involution in cases 1, 7, 
8, and 9, while cases 2 and 3 cannot happen.

In case 1,  the center of $\Sp{2m}{p^k}$ coincides with that of $\Sp{2n}{p}$,
so $H$ contains the central involution.

Case 2 cannot happen because   $\Sp{2n}{p}$ cannot contain $\SL{m}{p^k}$
with $3\leq m\leq 2n$ and $mk=2n$. Note that each Singer cycle of $\SL{m}{p^k}$
has  order $(p^{2n}-1)/(p^k-1)$, but $\Sp{2n}{p}$ has no such element~\cite{Hupp70,
Bere00}.

Case 3 can be ruled as follows. The group $\GaL{1}{p^{2n}}$ is a semidirect
product
$\mrm{Gal}(\mathbb{F}_{p^{2n}}/\bbF_{p})\ltimes \GL{1}{p^{2n}}$, where $\GL{1}{p^{2n}}$
can be identified as   a Singer cyclic group of $\GL{2n}{p}$, and $\mrm{Gal}(\mathbb{F}_{p^{2n}}/\bbF_{p})$
is the Galois group of the field $\mathbb{F}_{p^{2n}}$ over the prime field
$\bbF_{p}$, which  is cyclic of order $2n$ \cite{DixoM96book}. The intersection
of any Singer cyclic group of $\GL{2n}{p}$ with $\Sp{2n}{p}$ has order at
most $p^n+1$ \cite{Bere00}. So the order of $H$ is at most $2n(p^n+1)$ and 
cannot be divisible by $p^{2n}-1$. Consequently, $H$ cannot be transitive.

If case 7 holds, let $P$ be the normal subgroup of $H$ that is isomorphic
to $2^{1+4}$. Then the center of $P$ has order 2 and is contained in the
center of $H$. On the other hand, $H$ has order divisible by a Zsigmondy
prime of $3^4-1$ so it contains a Zsigmondy cycle. The involution in the
center of $P$ commutes with the Zsigmondy cycle, so it must coincide with
 the central involution of $\Sp{2n}{p}$.

In case 8, the conclusion follows from the same reasoning as  in case 7 since
the center of $\SL{2}{5}$  has order~2. 

If case 9 holds, then $H$ contains a Zsigmondy cycle; the unique involution
in $H$ commutes with the Zsigmondy cycle and thus coincides with the central involution
of $\Sp{2n}{p}$.
\end{proof}

Next, we show that the minimal degree of nontrivial irreducible projective representations of any transitive subgroup $H$ of $\Sp{2n}{2}$  with $n\geq 2$ is always larger than $2^n/2$ except when $n=2$  and $H$ is isomorphic to $\Sp{2}{2^2}$. Recall that $H$ satisfies one of the conditions 1 to 6 in \lref{lem:TransitiveGen}.
  Cases 2 and 3 cannot happen according to the same reasoning
as in the proof of Lemma~3.

In case 1, according to Table~II in \rcite{TiepZ96},  the minimal degree of  nontrivial irreducible projective representations of $\Sp{2m}{2^k}$ with $mk=n$ and $n\geq3$  is
\begin{equation}\label{eq:MinDeg1}
\begin{cases}
2^n-1, &  k=n;\\
\frac{(2^n-1)(2^n-2^k)}{2(2^k+1)}, & k<n;
\end{cases}
\end{equation}
which is always larger than $2^n/2$. It remains to consider the cases  $n=2$ and $k=1, 2$. The two groups $\Sp{4}{2}$ and $\Sp{2}{2^2}\simeq\SL{2}{2^2}$ are isomorphic to the symmetric group of degree~6 and alternating group of degree~5, respectively. The minimal degree of  nontrivial irreducible projective representations is 4 for $\Sp{4}{2}$ and 2 for $\Sp{2}{2^2}$.
Incidentally, according to theorem 7 in \rcite{BoltRW61II}, the HW group in dimension $2^n$ with $n\geq 2$ is not complemented in the  Clifford group, so  the Clifford group  has no subgroup isomorphic to $\Sp{2n}{2}$.

In case 4, according to   Sec. 5.3 in  \rcite{Lube01},   the minimal degree of nontrivial irreducible projective representations of  $G_2(2^k)$ (assuming $n=3k$) is
\begin{equation}\label{eq:MinDegG2}
\begin{cases}
6, &  k=1; \\
2^n-1, & 2\nmid k, k\geq 3;  \\
2^n+1, & 2 | k, k\geq 2;
\end{cases}
\end{equation}
which is always larger than $2^n/2$.

In case 5,  $\Sp{4}{2}$ cannot contain $A_7$ (which has larger order than $\Sp{4}{2}$). The minimal degree of nontrivial irreducible projective representations of $A_6$ is 3. Incidentally,  any subgroup of the Clifford group in dimension 4 that is isomorphic to $A_6$ is irreducible.

In case 6, the two minimal degrees of nontrivial irreducible projective representations of $\PSU{3}{3}$ are  6 and 7 according to Table~V in \rcite{TiepZ96}. Incidentally, the stabilizer of each fiducial state of the set of Hoggar lines is isomorphic to $\PSU{3}{3}$ \cite{Zhu15S}.

\end{cbunit}
\end{document}